\newtheorem{theorem}{Theorem}[section]
\newtheorem{corollary}{Corollary}[theorem]
\newtheorem{lemma}[theorem]{Lemma}
\newtheorem{proposition}[theorem]{Proposition}
\newtheorem*{conjecture}{Conjecture}
\theoremstyle{definition}
\newtheorem{definition}{Definition}[section]
\theoremstyle{remark}
\newtheorem*{remark}{Remark}
\newtheorem*{proofsketch}{Proof sketch}
\newlength\tindent
\title{Time complexity for deterministic string machines}
\author{Nur \c{C}ataltepe\thanks{MATS 5.0. Email: \texttt{ali@cataltepe.com}}, Vanessa Kosoy\thanks{The Hebrew University of Jerusalem.
Mentor for MATS 5.0. Email: \texttt{vanessa.kosoy@intelligence.org}}}
\date{May 9, 2024}
\begin{document}
\maketitle

\begin{abstract}
\noindent
Algorithms which learn environments represented by automata in the past have
had complexity scaling with the number of states in the automaton, which can
be exponentially large even for automata recognizing regular
expressions with
a small description length. We thus formalize a compositional 
language that can construct automata as transformations
between certain types
of category, representable as string diagrams, which better
reflects the description complexity of various automata. We
define complexity constraints on this framework by having them
operate on categories enriched over filtered sets, and using these
constraints, we prove
elementary results on the runtime and expressivity of a subset of these
transformations which operate deterministically on finite state spaces.
These string diagrams, or ``string machines,'' are themselves morphisms
in a category, so it is possible for string machines to create other
string machines in runtime to model computations which take more
than constant memory.
We prove sufficient conditions for polynomial
runtime guarantees on these, which can
help develop complexity constraints on string
machines which also encapsulate
runtime complexity.
\end{abstract}
\section{Introduction}
 String machines~\cite{kosoy2023compositional} is a framework that
 aims to provide a generalization of transducers, modeling them as
 functors that take morphisms in one category to ensembles 
 of ``recipes'' for constructing other morphisms in another category.
 We formalize and expand upon definitions for
 deterministic string machines sketched in the AI
 Alignment Forum post introducing the framework, with an eye towards
 introducing formalisms to help bound description complexity and
 runtime complexity. To this end, we introduce filtered
 transducers, which operate on categories where morphisms
 have ``degrees'' additive under composition, and prove some elementary
 runtime and expressivity properties for these transducers.

\subsubsection*{Acknowledgements}
This project was supported by a grant from the Berkeley Existential
Risk Initiative (BERI), as part of the ML Alignment \& Theory Scholars
(MATS) 5.0
research program.

\subsection{Motivations}
Theoretical reinforcement learning settings that utilize
automata (usually as transducers operating on 
histories)~\cite{ronca2022efficient} or Markov decision
processes to model
hypotheses about the environment often have time complexity
scaling in the number of states of the
environment~\cite{kearns1995computational}. However,
parameterized families of
automata with low description complexity can still grow exponentially
in size with these parameters. For example, given a parameterized
family of regular languages in some variable $n$, where the automaton
recognizing a given language has $O(1)$ states, the product automaton,
recognizing the intersection of the first $n$ languages, will have
$O(\exp(n))$ states, despite the fact that we could store this
parameterized family of automata with linearly increasing space
instead.
\\\\
Frameworks such as Krohn-Rhodes theory~\cite{krohn1965algebraic}
give decompositions of automata into algebraic structures in a way
that avoids exponential
growth of description
length, but these frameworks as they currently exist
do not generalize to automata more expressive
than deterministic finite state automata. Modeling
compositions of Mealy/Moore
machines with output alphabets that can themselves be interpreted
as Mealy/Moore machines could tractably model some intermediate
stages of expressivity between finite-state transducers and
Turing machines.
If an ML algorithm that operates
on an environment clearly more complex than can be
efficiently modeled by automata could be said to be ``approximating'' an
RL algorithm that operates on hypothesis spaces populated by automata,
then it must be taking advantage of some space-saving structure, as
ML algorithms have resource bounds too tight to scale with an
exponentially-growing number of states. It
would also be helpful to have a better approximation for
the Kolmogorov complexity of transducers for the purpose
of frugal priors in theoretical RL settings
similar to AIXI~\cite{hutter2000theory}, as the actual Kolmogorov
complexity is not computable in general.

\subsection{Prior work}
Category-theoretic generalizations of automata have been
explored before, either making
automata into algebras over certain
structures~\cite{adamekautomata} or morphisms in certain
categories~\cite{niu2023polynomial, boccali2023completeness}, or having
them operate on morphisms in a category~\cite{earnshaw2022regular,earnshaw2023string}.
We are interested in the latter two cases, which would give us
the ability to naturally model ``building other automata in runtime''
by modeling automata as morphisms between categories which themselves
includes automata as morphisms. Work by Earnshaw et al. on
monoidal languages~\cite{earnshaw2022regular,earnshaw2023string}
focuses on generalizing classes of structures
(string diagrams) beyond strings
which can be recognized by automata, but is not immediately
concerned with functions between classes of string diagrams.
Work by Niu~and~Spivak~\cite{niu2023polynomial} introduces
modes of composing natural transformations between representable
functors which induce families of maps that define
automata and transducers, and work on generalizations of
automata into morphisms in a category by 
Boccali~et~al.~\cite{boccali2023completeness} provides an
expressive framework for constructing a category of automata
in a way technically compatible with recursive automaton-building.
However, neither of these formalisms is especially focused on
finding more frugal or tractable ways of representing existing
computational processes.

\section{Deterministic transducers}
Here we provide necessary definitions for deterministic
string machines. Monadic string machines, which model
probabilistic behavior, are described in the AI Alignment Forum
post~\cite{kosoy2023compositional}, but the results in
Section~\ref{sec:paramcomplexity} are concerned solely with
the deterministic case.
\begin{definition}
\label{def:statecategory} Given a copy-discard category
$\mathcal{C}$, its corresponding
\textbf{state category} $\mathcal{C}^S$ is the category where
\begin{itemize}
    \item an object $A$ is a tuple $(S(A), V_A)$, where:
    \begin{itemize}
    \item \textbf{The state space of $A$} is a set $S(A)$.
    \item \textbf{The variables of $A$} are a functor $V_A$ from the
    discrete category corresponding to $S(A)$ to the discrete category
    formed by all finite ordered tuples of pairs of objects
    $X,Y$ in $\mathcal{C}$. We call the tuple associated with a given
    $x \in
    S(A)$ \textbf{the variables at $x$}, or $V_A(x)$. Since variables
    implicitly denote morphism signatures, we will refer to a given
    variable $(X,Y) \in V_A(x)$ as $X\to Y$. 
\end{itemize}
    \item a morphism $f:A \to B$ is a tuple
    $(S(f),V_f)$, where:
    \begin{itemize}
        \item \textbf{The state transition} $S(f)$ is a function
        $S(A) \to S(B)$.
        \item \textbf{The output function at $x \in S(A)$},
        denoted $V_f(x)$, is a collection of morphisms in a
        category freely-generated over
        $\mathcal{C}$, defined as follows:
        \\ \\
        For each variable $X_j'\to Y_j'$ at
        $S(f)(x)$,
        $V_f(x)$ defines a morphism $X_j'\to Y_j'$
        in the freely-generated
        category over $\mathcal{C}$ with generators
        provided by the variables
        $X_i\to Y_i$ in $V_A(x)$. The function
        between hom-sets in $\mathcal{C}$ can then be expressed by
        substituting each generator with a $\mathcal{C}$-morphism of
        the appropriate signature.
        \\ \\
        This induces a function between the product of
        the hom-sets $\mathrm{Hom}_{\mathcal{C}}(X_i,Y_i)$ for all
        variables $X_i\to Y_i$ at $x$ and the product of the hom-sets
        $\mathrm{Hom}_{\mathcal{C}}(X_j',Y_j')$ for all variables
        $X_j'\to Y_j'$ at $S(f)(x)$, which can be obtained by replacing
        each generator corresponding to a variable at $x$ with a 
        morphism of $\mathcal{C}$ of the appropriate signature.
    \end{itemize}
    \item Given two morphisms $f:A\to B$
    and $g:B \to C$, their composition
    $g\circ f:A\to C$ is a tuple $(S(g\circ f),V_{g \circ f})$ where:
    \begin{itemize}
        \item The composite
        state transition
        $S(g\circ f)$ is defined as $S(g) \circ S(f)$.
        \item The composite output function $V_{g\circ f}(x)$ is
        defined as follows:
        For each variable $X_k''\to Y_k''$ defined at $S(g\circ f)(x)$,
        $V_g(S(f)(x))$ defines a morphism
        in the category freely generated
        over $\mathcal{C}$ with generators corresponding to the variables
        $\{X_j'\to Y_j'\}_j$
        at $S(f)(x)$. Now, $V_f(x)$ induces a functor
        from $\mathcal{C}[X_1'\to Y_1',\dots]$ to
        $\mathcal{C}[X_1\to Y_1,\dots]$ (where $\{X_i\to Y_i\}_i$ denote
        the variables at $x$), which fixes
        $\mathcal{C}$ but replaces every free generator of the category
        freely generated over the variables at $S(f)(x)$ with the
        morphism associated with it
        by $V_f(x)$,
        in the category freely generated over the variables
        at $x$. Therefore, the output function $V_{g\circ f}(x)$ at
        $x$ is simply the image of each of the morphisms generated
        by $V_g(x)$ under this functor defined by $V_f(x)$. The function
        between products of
        hom-sets induced by the composite output function is
        therefore just the composition of the functions induced by the
        individual morphisms' output functions.
    \end{itemize}
    \item Given two objects $A,B$, their monoidal
    product $A\otimes B$ has state space $S(A\otimes B)=
    S(A)\times S(B)$ and variables $V_{A\otimes B} = 
    V_A \times V_B$, in the sense that the variables at
    $(x,y) \in S(A\otimes B)$ are the concatenation of
    the variables lists $V_A(x)$ and $V_B(y)$. The monoidal identity
    is the single-element state space with no variables. There is a
    copy-discard (in fact, a Markov category, since a single-element
    state space with no variables can only have a single output
    function into it and is therefore a terminal object) structure
    on $\mathcal{C}^S$ provided by a morphism which is the copy map
    on the state space and for each variable $X_i\to Y_i$ in
    the output state space, generates a morphism solely consisting
    of the free generator of that same signature that that variable
    is a clone of.
\end{itemize}

The state category construction could be thought of as a
functor on the category of small categories.
Given a functor $F:\mathcal{C}\to\mathcal{D}$,
we can define a functor $F^S:\mathcal{C}^S\to\mathcal{D}^S$
with maps objects $(S(A),V_A)$ of $\mathcal{C}^S$ to
objects $(S(F^S(A)),V_{F^S(A)})$ of $\mathcal{D}^S$ such that
$S(F^S(A)) \simeq S(A)$ as sets with an
isomorphism $\varphi_{F^S,A}$, and for every $x \in S(A)$,
$V_{F^S(A)}(\varphi_{F^S,A}(x))= F(V_A(x))$, in the sense that the
signatures in
$V_{F^S(A)}(\varphi_{F^S,A}(x))$
are simply the image under $F$ of the signatures
at $V_A(x)$. The image of a state-category morphism $A\to B$
$(S(f),V_f)$ in $\mathcal{C}^S$ under $F^S$ is a morphism
with an identical state transition such that the relations between
the variable tuples provided by the isomorphism between sets
commutes with this state transition, that is,
$\varphi_{F^S,B} \circ S(f) = S(F^S(f)) \circ \varphi_{F^S,A}$.
For every $x\in S(F^S(A))$, the
morphisms generated by the output function $V_{F^S(f)}(x)$
are simply the images, under the natural extension of $F$ to the
appropriate freely-generated categories over $\mathcal{C}$
and $\mathcal{D}$, of the morphisms generated by the output function
$V_{f}(\varphi_{F^S,A}^{-1}(x))$.
\\ \\
However, it is unclear if
the state category construction actually
provides a 2-functor. Given functors $F,G:\mathcal{C}
\to\mathcal{D}$ and a natural transformation $\eta:F\to G$, we
try defining a natural transformation $\eta^S:F^S\to G^S$ where,
for an object $A$ of $\mathcal{C}^S$, the morphism
$\eta^S_A$ has a bijection $S(F^S(A)) \to
S(G^S(A))$ (we know this bijection exists, since
$S(F^S(A))$ and $S(G^S(A))$ are bijective to the
same set), which commutes with
$\varphi_{F^S,A}$ and $\varphi_{G^S,A}$, as its state transition.
An issue arises when
trying to construct an output function for
$\eta_A^S$. For a given $x \in S(A)$, where
$V_A(x)$ consists of signatures $X_i\to Y_i$ and
therefore $V_{F^S(A)}(\varphi_{F^S,A}(x))$ consists of signatures
$F(X_i) \to F(Y_i)$ (and likewise for $G^S$), the morphisms provided
by $\eta_{X_i}:F(X_i)\to G(X_i)$ and $\eta_{Y_i}:
F(Y_i)\to G(Y_i)$ and the generators of signature
$F(X_i) \to F(Y_i)$ do not guarantee a way to construct a morphism
of signature $G(X_i)\to G(Y_i)$, since we are not provided with
any natural choices for outgoing morphisms from $G(X_i)$.
\\ \\
With the state category defined, we can now define the primary
building block of a string machine, the transducer.
\end{definition}
\begin{definition}
    \label{def:dettransducer}
    A \textbf{deterministic transducer} is a collection of the
    of the following data:
    \begin{itemize}
        \item an \textbf{input category} $\mathcal{C}$ and
        \textbf{output category} $\mathcal{D}$, which are both
        copy-discard categories,
        \item a \textbf{primary input ($\mathcal{C}$-)signature}
        $A\to B$,
        \item \textbf{auxiliary input ($\mathcal{D}$-)signatures}
        $X_1\to Y_1,\dots,X_n\to Y_n$,
        \item \textbf{output ($\mathcal{D}$-)signatures}
        $X_1'\to Y_1',\dots, X_m'\to Y_m'$,
        \item a \textbf{structure functor} $F:\mathcal{C}
        \to\mathcal{D}^S$ satisfying the following properties:
        \begin{itemize}
            \item $F$ is strong monoidal.
            \item For every $x\in S(F(A))$, the
            variables at $V_{F(A)}(x)$ are a prefix of
            $X_1\to Y_1,\dots, X_n \to Y_n$.
            \item For every $y \in S(F(B))$, 
            $X_1'\to Y_1',\dots, X_m'\to Y_m'$ is a prefix
            of $V_{F(B)}(y)$.
        \end{itemize}
    \end{itemize}
    The structure functor
    additionally induces an \textbf{input state space}
    $S(F(A))$ and \textbf{output state space} $S(F(B))$.
\end{definition}
Thus, a deterministic transducer takes a morphism
in $\mathcal{C}$ of signature
$A\to B$ to a ``finite state machine-controlled''
way of constructing morphisms in $\mathcal{D}$, and outputs
morphisms in $\mathcal{D}$ by substituting the free
generators over $\mathcal{D}$ with the morphisms received in
the auxiliary inputs.

\subsection{Composition of transducers}
If we consider a transducer to have incoming legs labeled with
its input state space and primary-auxiliary input signatures, and
outgoing legs labeled with its output state space and output signatures,
we can contract incoming legs with outgoing legs of the same signature
such that no cycles are formed if we interpret the transducer/leg complex
as a directed graph.
Now, given inputs to the remaining free input legs, the contracted
input legs will receive the outputs generated at their contracted
output legs. We can also take the ``product'' of two transducers
(or compositions of transducers) and have them each process their
inputs and produce their outputs independently.
\begin{definition}
    \label{def:strmachinecategory} The
    category $\mathbf{StrMach}$
    of string machines is defined as follows:
    \begin{itemize}
        \item Objects are finite tuples of the form
        $((S(A_1),\dots,S(A_n)),(X_1\to Y_1,\dots,
        X_m\to Y_m)$, where each $S(A_i)$ is a set and
        each $X_j\to Y_j$ is a morphism signature in some
        copy-discard category. The morphism signatures are not
        necessarily in the same category.
        \item Morphisms are \textbf{string machines}, which
        consist of transducers and finite compositions of transducers.
        The two objects that a string machine is a morphism between
        is determined by its total input and output signatures.
        \item There exists a monoidal product on $\mathbf{StrMach}$
        defined as the concatenation of the tuples making up objects
        (with the identity given by an empty tuple),
        and a natural copy-discard structure provided by a
        string machine with two state spaces (the
        output being the self-product of the input)
        and a pre-loaded state transition that is the copy morphism
        in the state category.
    \end{itemize}
\end{definition}

\subsection{The meta-vertex and looping}
Since $\textbf{StrMach}$ is itself a copy-discard category, we can
now have string machines which output other string machines. To lessen
reliance on closed categories, we will define a hierarchy of categories
corresponding to ``levels of meta.''
\begin{definition}
    \label{def:metavertex}
    Given objects $X,Y$ in $\mathbf{StrMach}$, we define
    the \textbf{meta-vertex} to be a special
    kind of string machine with input signature
    corresponding to the data in $X$'s definition as
    an object in $\mathbf{StrMach}$ alongside $X\to Y$, and output
    signature corresponding to the data in $Y$'s definition. The
    meta-vertex therefore takes a string machine with input
    signature $X$ and output signature $Y$ as well as the necessary
    input data for the string machine as input, and outputs that
    string machine's output as its output. Fixing some partially-ordered
    set $(S,\leq)$, we define the category $\mathbf{StrMach}_0=
    \mathbf{StrMach}$ (which has no meta-vertices) for any
    element $0 \in S$ such that $k\leq 0 \implies k=0$ for
    any $k\in S$,
    and recursively define
    the category $\mathbf{StrMach}_k$ for some
    $k \in S$ to be the category generated over
    $\mathbf{StrMach}$ by adding only meta vertices which 
    accept string machines that are morphisms in
    $\mathbf{StrMach}_{k'}$ for $k' \leq k$.
\end{definition}

We can thus have a string
machine consisting of a transducer that builds transducers that
themselves build transducers (and so on), and feeds this transducer
into a meta-vertex, to create a loop which is guaranteed to terminate if
we fix our ordinals to come from $\mathbb{N}$.

\section{Parameterizing complexity in
deterministic string machines}
\label{sec:paramcomplexity}
\subsection{Filtered transducers}
\begin{definition}
    \label{def:filtsetcat} The
    category $\mathbf{FiltSet}$ of filtered sets is the category such that
    \begin{itemize}
    \item an object is a tuple $(S,\deg_S)$, where $S$ is a set and
    $\deg_S:S \to \mathbb{N}$ is a function,
    \item a morphism $f:(S,\deg_S) \to (T,\deg_T)$ is a function
    $S \to T$ such that $\deg_T(f(s)) \leq \deg_S(s)$ for all $s\in S$. 
    \end{itemize}
\end{definition}
We will generally refer to objects in $\mathbf{FiltSet}$ solely
by the symbol corresponding to the underlying set going forward. One
can observe that the identity function on a set $S$ by definition satisfies
$\deg_S(\mathrm{id}_S(s))=\deg_S(s)$ for all $s \in S$ and is thus a
morphism in $\mathbf{FiltSet}$. One can also observe that given
$f:S\to T$ and $g:T \to V$,
$\deg_V(g(f(s))) \leq \deg_T(f(s)) \leq \deg_S(s)$ for all
$s \in S$, and therefore $g \circ f$ is also a morphism in
$\mathbf{FiltSet}$.
Therefore, $\mathbf{FiltSet}$ is indeed a category.

\begin{definition}
\label{def:filtsetproduct}
Given two objects $S,T \in \mathrm{Ob}(\mathbf{FiltSet})$,
we define
their \textbf{filtered product}
$S \otimes T$ to be the set $S \times T$ equipped
with the function $\deg_{S \otimes T}:S \times T \to \mathbb{N}$ satisfying
$\deg_{S\otimes T}(s,t) = \deg_S(s)+\deg_T(t)$ for all $(s,t) \in 
S\times T$. Given a morphism $f:S \to U$ and a morphism
$g:T\to V$, we define the morphism
$f\otimes g:S\otimes T \to U\otimes V$ to be the function $f\times g$.
Indeed, we have that $\deg_{U\otimes V}(f(s),g(t)) = 
\deg_U(f(s))+\deg_V(g(t))\leq \deg_S(s)+\deg_T(t) = \deg_{S\otimes T}(s,t)$,
so $f\otimes g$ is a morphism in
$\mathbf{FiltSet}$.
\end{definition}
Due to the associativity and commutativity of addition, as well
as the natural associativity and commutativity (up to isomorphisms which
are still isomorphisms in $\mathbf{FiltSet}$) of the cartesian product,
$-\otimes -$ is naturally associative and commutative up to isomorphism.
Additionally, the one-element set $\mathbf{1}$ equipped with
$\deg_{\mathbf{1}}(\cdot)=0$ and unitor maps which are the same as in
$\mathbf{Set}$ (which are, by their definition, filtered morphisms)
provides a
left and right unit for $-\otimes -$, making
$\mathbf{FiltSet}$ a symmetric monoidal category.

\begin{remark}
Suppose filtered sets $S,T,U$ and filtered morphisms $f:S \to T$
and $g:S\to U$. Then, the unique factoring function $S\to T\times U$
defined by $s \mapsto (f(s),g(s))$ is only a filtered morphism 
$S \to T\otimes U$ if
$\deg_T(f(s))+\deg_U(g(s)) \leq \deg_S(s)$, which does not hold in
general. Therefore, $-\otimes -$ does not provide a product except for
when at least one of the sets has degree uniformly zero. However,
$\mathbf{FiltSet}$ does have finite products $S\times T$ where
$\deg_{S\times T}(s,t) := \max(\deg_S(s),\deg_T(t))$. We will not be using
this construction.
\end{remark}

\begin{remark}
The set-theoretic disjoint union,
with its degree function
being the canonical factoring map to $\mathbb{N}$ of its components'
degree functions, provides all finite coproducts in
$\mathbf{FiltSet}$.
\end{remark}

\begin{definition}
    \label{def:filteredmorphism} A \textbf{filtered-morphism
    category}
    $\mathcal{C}$ is a locally small symmetric
    monoidal category enriched over
    $\mathbf{FiltSet}$, using $\mathbf{FiltSet}$'s filtered product
    $-\otimes -$ as its monoidal structure.
\end{definition}
This expresses the notion of morphisms having degrees which
are subadditive under composition in a way that naturally extends to a
complexity constraint on transducers. As the monoidal identity of
$\mathbf{FiltSet}$ is the single-element set with degree zero, the
arrows $I_{\mathbf{FiltSet}} \to \mathrm{Hom}_{\mathcal{C}}(A,A)$ providing
the identity 
morphism $\mathrm{id}_A$
in the enrichment construction will ensure that identity morphisms are
degree zero.
\\ \\
One can generalize the construction of $\mathbf{Filtset}$ by replacing
$\mathbb{N}$ with any partially-ordered symmetric monoid
(where the partial order is, obviously, translation-invariant
as it is with the usual definition of a partially-ordered
group). We will make use of
categories filtered over $\mathbb{N}^2$ (with the partial order
given by $(a,b) \leq (c,d)$ if and
only if $a\leq c$ and $b\leq d$) in our
construction of the filtered state category, as follows:

\begin{definition}
    \label{def:freegenfilteredmorphism} Given a filtered-morphism category
    $\mathcal{C}$ and a list $\{(X_i \to Y_i, a_ix+b_i)\}_{i}$ of
    $\mathcal{C}$-signatures (generators)
    and linear polynomials with coefficients
    in $\mathbb{N}$, the \textbf{freely-generated $\mathbb{N}^2$-filtered
    category over $\mathcal{C}$} with this list of signatures,
    or $\mathcal{C}[X_1\to Y_1,\dots, X_n \to Y_n]$, is the
    symmetric monoidal category
    where:
    \begin{itemize}
        \item objects are exactly the objects of $\mathcal{C}$.
        \item A morphism $f:X\to Y$ is either a morphism of
        $\mathcal{C}$, or obtained by composing morphisms of
        $\mathcal{C}$ with free generators added to
        $\mathrm{Hom}_{\mathcal{C}}(X_i,Y_i)$ for each
        $\mathcal{C}$-signature $X_i\to Y_i$ in the list.
        \item The \textbf{$\mathbb{N}^2$-degree} of a morphism $f$ is
        a linear polynomial with coefficients in
        $\mathbb{N}$, which is either its
        degree in $\mathcal{C}$ (expressed as a constant)
        if it is a $\mathcal{C}$-morphism, the polynomial $a_ix+b_i$ if
        it is one of the generators $X_i \to Y_i$ added when creating
        the category, or the sum of the degrees of all morphisms involved
        in its composition, with the degree
        contributions of composition-chains of
        $\mathcal{C}$-morphisms being the degrees in $\mathcal{C}$ of the
        morphism they compose to, and these being strictly additive 
        with each other and the degrees of any generator morphisms once
        all simplifications have been performed.
        \\ \\
        The monoidal product of
        a morphism with generators in its composition-chain and a
        $\mathcal{C}$-morphism is taken to have the sum of their
        $\mathbb{N}^2$-degrees.
    \end{itemize}
\end{definition}
We can now introduce filtered state categories.
\begin{definition}
\label{def:filteredstatecategory} Given a filtered-morphism category
$\mathcal{C}$, its corresponding
\textbf{filtered state category} $\mathcal{C}^S$ is the category where
\begin{itemize}
    \item an object $A$ is a tuple $(S(A), V_A, \deg_A)$, where:
    \begin{itemize}
    \item \textbf{The state space of $A$} is a set $S(A)$.
    \item \textbf{The variables of $A$} are a functor $V_A$ from the
    discrete category corresponding to $S(A)$ to the discrete category
    formed by all finite ordered tuples of pairs of objects
    $X,Y$ in $\mathcal{C}$. We call the tuple associated with a given $x \in
    S(A)$ \textbf{the variables at $x$}, or $V_A(x)$. Since variables
    implicitly denote morphism signatures, we will refer to a given
    variable $(X,Y) \in V_A(x)$ as $X\to Y$. 
    \item \textbf{The degree of a variable $X\to Y \in V_A(x)$} for some
    $x \in S(A)$, denoted $\deg_{A,x}(X \to Y)$ (or simply
    $\deg_{A}(X\to Y)$ if $x$ can be inferred from context) is a 
    tuple $(a,b) \in \mathbb{N}^2$. We impose the same partial ordering
    $\leq$
    on these tuples as we do with degrees in the freely-generated
    $\mathbb{N}^2$-filtered categories.
\end{itemize}
    \item a morphism $f:A \to B$ of degree $\ell$, where
    $\ell \in \mathbb{N}$, is a tuple
    $(S(f),V_f)$, where:
    \begin{itemize}
        \item \textbf{The state transition} $S(f)$ is a function
        $S(A) \to S(B)$.
        \item \textbf{The output function at $x \in S(A)$},
        denoted $V_f(x)$, is a collection of morphisms in a
        freely-generated $\mathbb{N}^2$-filtered category over
        $\mathcal{C}$, defined as follows:
        \\ \\
        For each variable $X_j'\to Y_j'$ of degree $a_j'x+b_j'$, at
        $S(f)(x)$,
        $V_f(x)$ defines a morphism $X_j'\to Y_j'$ of degree
        at most $a_j'(x+\ell)+b_j'=a_j'x+(a_j'\ell+b_j')$ (that is,
        the degree of the morphism cannot exceed this polynomial in either
        of its terms)
        in the freely-generated
        $\mathbb{N}^2$-filtered category over $\mathcal{C}$ with generators
        and the degrees of those generators provided by the variables
        $X_i\to Y_i$ of degrees $a_ix+b_i$ in $V_A(x)$. The function
        between hom-sets in $\mathcal{C}$ can then be expressed by
        substituting each generator with a $\mathcal{C}$-morphism of
        the appropriate signature.
        \\ \\
        This induces a function (but not a filtered
        one!) between the product of
        the hom-sets $\mathrm{Hom}_{\mathcal{C}}(X_i,Y_i)$ for all
        variables $X_i\to Y_i$ at $x$ and the product of the hom-sets
        $\mathrm{Hom}_{\mathcal{C}}(X_j',Y_j')$ for all variables
        $X_j'\to Y_j'$ at $S(f)(x)$, which can be obtained by replacing
        each generator corresponding to a variable at $x$ with a 
        morphism of $\mathcal{C}$ of the appropriate signature.
    \end{itemize}
    \item Given two morphisms $f:A\to B$ of degree $\ell$
    and $g:B \to C$ of degree $\ell'$, their composition
    $g\circ f:A\to C$ is a tuple $(S(g\circ f),V_{g \circ f})$ where:
    \begin{itemize}
        \item The composite
        state transition
        $S(g\circ f)$ is defined as $S(g) \circ S(f)$.
        \item The composite output function $V_{g\circ f}(x)$ is
        defined as follows:
        For each variable $X_k''\to Y_k''$ defined at $S(g\circ f)(x)$, of
        degree $a_k''x+b_k''$,
        $V_g(S(f)(x))$ defines a morphism
        of degree at most $a_k''(x+\ell')+b_k''=a_k''x+(a_k''\ell'+b_k')$
        in the category freely generated
        over $\mathcal{C}$ with generators corresponding to the variables
        $\{X_j'\to Y_j'\}_j$
        at $S(f)(x)$ and their degrees. Now, $V_f(x)$ induces a functor
        from $\mathcal{C}[X_1'\to Y_1',\dots]$ to
        $\mathcal{C}[X_1\to Y_1,\dots]$ (where $\{X_i\to Y_i\}_i$ denote
        the variables at $x$), which fixes
        $\mathcal{C}$ but replaces every free generator of the category
        freely generated over the variables at $S(f)(x)$ with the
        morphism associated with it
        by $V_f(x)$,
        in the category freely generated over the variables
        at $x$. Therefore, the output function $V_{g\circ f}(x)$ at
        $x$ is simply the image of each of the morphisms generated
        by $V_g(x)$ under this functor defined by $V_f(x)$. The function
        between products of
        hom-sets induced by the composite output function is
        therefore just the composition of the functions induced by the
        individual morphisms' output functions.
        \\ \\
        Now, for a given
        variable $X_k''\to Y_k''$ at $S(f\circ g)(x)$ of degree
        $a_k''x+b_k''$,
        every generator of some degree
        $a_j'x+b_j'$
        involved in the morphism of
        (initial) degree $a_k''x+a_k''\ell'+b_k''$
        defined
        by $V_g(x)$ has been replaced with one of degree
        $a_j'x+a_j'\ell+b_j'$. Suppose each generator $X_j'\to Y_j'$
        appears with multiplicity $m_j'$ in the morphism corresponding to
        $X_k''\to Y_k''$. Then, we have it that $\sum_{j}m_ja_j' \leq 
        a_k''$, and since $a_j'\ell$ worth of degree has been added to
        the morphism for every generator appearing in it, we have that
        the degree of the morphism generated by $V_{g\circ f}(x)$ must
        be $a_k''+a_k''\ell'+b_k''+\sum_{j}m_ja_j'\ell \leq
        a_k''+a_k''(\ell+\ell')+b_k''$. So, the
        degree of $g\circ f$ is $\ell+\ell'$.
    \end{itemize}
\end{itemize}

\end{definition}

\begin{remark}
$\mathcal{C}^S$ is a filtered-morphism category with the degrees
of morphisms being defined as in the definition of
the category,
as the identity morphism of each object $A$ has an output function that
outputs a variable of degree $ax+b$ (the generator
corresponding to the variable itself) for any variable of degree
$ax+b$ and is thus of degree zero. Additionally, as also demonstrated
in the definition, degree is subadditive with composition, meaning
that the composition arrow is indeed a morphism in
$\mathbf{FiltSet}$.
\end{remark}

\begin{definition}
\label{def:filteredfunctor}
A \textbf{filtered functor} $F:\mathcal{C}\to \mathcal{D}$ is an
enriched functor
between filtered-morphism categories.
\end{definition}

Given a filtered functor $F:\mathcal{C} \to \mathcal{D}$, since
the definition of an enriched functor requires the functor's action
on morphisms 
to be provided by a morphism between hom-objects within the category,
it automatically holds that $\deg_{\mathcal{D}} F(\alpha)
\leq \deg_{\mathcal{C}}(\alpha)$ for all morphisms
$\alpha$ in
$\mathcal{C}$.

\begin{definition}
\label{def:filtereddcc} A \textbf{filtered deterministic transducer}
is a deterministic transducer where the input and output categories,
$\mathcal{C}$ and $\mathcal{D}$, are
filtered-morphism categories, and the functor $F:\mathcal{C}\to\mathcal{D}^S$
affording the transducer
structure is a
filtered functor to the filtered state category of the output category.
\end{definition}
\subsection{Finite-state string machines}
For the following section, we assume
a filtered deterministic transducer $T$ with input category
$\mathcal{C}$, output category $\mathcal{D}$, structure functor
$F$, input state space $S(A)$, output state space $S(B)$, primary
input $\mathcal{C}$-signature $X\to Y$, auxiliary input
$\mathcal{D}$-signatures $(X_i \to Y_i)_{1 \leq i \leq n}$, and
output $\mathcal{D}$-signatures $(X_j' \to Y_j')_{1 \leq j \leq m}$.
We will now demonstrate that requiring a finite output state space on
a transducer provides us with certain output size and runtime guarantees.

\begin{proposition}
\label{prop:boundedduplication}
If $S(B)$ is finite, and, for
every $x \in S(A)$,
the linear term of the degree of every variable in $V_{A}(x)$ is
nonzero, then, for every $x \in S(A)$ and every primary input
$\mathcal{C}$-morphism $\alpha$, the number of copies of
the generator corresponding to each
auxiliary input $X_i \to Y_i$ appearing in each morphism
corresponding to each output signature
$X_j'\to Y_j'$ in
$V_{F(\alpha)}(S(F(\alpha))(x))$ is bounded by a
fixed integer $k$, which does not depend on $\deg (\alpha)$.
\end{proposition}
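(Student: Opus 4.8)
The plan is to read the desired bound off the \emph{linear coefficient} of the $\mathbb{N}^2$-degree of each output morphism, since---by the degree arithmetic of Definition~\ref{def:filteredstatecategory}---only the \emph{constant} coefficient grows with $\deg(\alpha)$, while the linear coefficient is pinned down by the (finitely many) output-variable degrees.

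Concretely, fix $x \in S(A)$ and a primary input $\alpha$, and let $\ell$ be the degree of the state-category morphism $F(\alpha)$; since $F$ is a filtered functor we have $\ell \le \deg(\alpha)$, but this will not matter. Write $x' = S(F(\alpha))(x) \in S(B)$, and for each output signature $X_j' \to Y_j'$ (a variable at $x'$) denote its degree by $a_j' t + b_j'$, where $t$ is the formal variable of the degree polynomial and the coefficients may depend on $x'$. By Definition~\ref{def:filteredstatecategory}, the output function produces a morphism $\phi_j$ of degree at most $a_j'(t+\ell)+b_j' = a_j' t + (a_j'\ell + b_j')$; in particular the linear coefficient of $\deg(\phi_j)$ is at most $a_j'$, independently of $\ell$.

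Next I would rewrite that linear coefficient as a weighted generator count. The morphism $\phi_j$ lives in the freely-generated $\mathbb{N}^2$-filtered category over $\mathcal{D}$ whose generators are the auxiliary inputs $X_i \to Y_i$ (the variables at $x$), of degrees $a_i t + b_i$. By Definition~\ref{def:freegenfilteredmorphism}, every $\mathcal{D}$-morphism contributes only a constant to the degree, so the linear part of $\deg(\phi_j)$ is exactly $\sum_i m_{ij} a_i$, where $m_{ij}$ is the multiplicity of the generator $X_i \to Y_i$ in $\phi_j$. Combining the two steps yields $\sum_i m_{ij} a_i \le a_j'$. The hypothesis that every variable has nonzero linear term gives $a_i \ge 1$ for each $i$, whence $m_{ij} \le \sum_i m_{ij} \le \sum_i m_{ij} a_i \le a_j'$. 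Finally, because $S(B)$ is finite and each of its states carries a finite variable tuple, the coefficients $a_j'$ range over a finite set as $x'$ and $j$ vary; taking $k$ to be their maximum gives a single integer bounding every $m_{ij}$, and $k$ is manifestly independent of $\deg(\alpha)$.

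The one step needing care is the claim that the linear coefficient of $\deg(\phi_j)$ equals $\sum_i m_{ij} a_i$ rather than something smaller: one must verify that collapsing composition-chains of $\mathcal{D}$-morphisms and applying all degree simplifications cannot lower this coefficient. This holds because $\mathcal{D}$-morphisms affect only the constant term and all degrees lie in $\mathbb{N}$, so no cancellation is possible. Conceptually, the nonzero-linear-term hypothesis is exactly what converts a degree bound into a count bound, and the finiteness of $S(B)$ is exactly what promotes the per-signature bounds $a_j'$ to one uniform constant $k$.
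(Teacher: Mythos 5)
Your proof is correct and follows essentially the same route as the paper's: bound the linear coefficient of each output morphism's $\mathbb{N}^2$-degree by the linear coefficient $a_j'$ of the corresponding output variable (noting this is unaffected by $\deg(\alpha)$, which enters only the constant term), observe that each generator occurrence contributes at least $1$ to that linear coefficient by the nonzero-linear-term hypothesis, and take the maximum of the $a_j'$ over the finitely many variables at the finitely many states of $S(B)$. Your write-up is somewhat more explicit than the paper's (in particular in isolating the identity between the linear part of the degree and the weighted generator count $\sum_i m_{ij} a_i$), but the underlying argument is identical.
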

\begin{proof}
Since $S(B)$ is finite, and the number of variables which exist
at any state is finite,
there exists a variable with maximal linear degree $a'_{\max}$ across
all variables in all states in $S(B)$, in
some state in $S(B)$. No morphism in $\mathcal{D}^S$ can generate
a morphism in the appropriate $\mathbb{N}^2$-graded free category
over $\mathcal{D}$ with linear degree greater than that of its corresponding
variable. Therefore, given any starting state $x \in S(A)$, no morphism
$f:A\to B$
in $\mathcal{D}^S$ exists where any component
of the output function has more than
$a'_{\max}$ copies of any of the generators corresponding to the variables
at
$x$, since each of these will have sizes with linear term at least $1$.
\end{proof}

\begin{proposition}
    \label{prop:finstateslinearoutput}
    Suppose $S(B)$ is finite,
    and every variable in every state in $S(A)$ has
    a linear term of $1$ or greater in its degree.
    Denote the primary input
    morphism of $T$ as $\alpha$
    and its tuple of auxiliary input morphisms as
    $\beta_1,\dots,\beta_n$, and define
    $\deg_m(\beta) := \max_{1\leq i \leq n}(\deg(\beta_i))$. Then,
    for every output signature $X'_j\to Y_j'$, there exist
    fixed integers
    $a^{T}_j, b^T_k, c^T_j$ such that, for the morphism
    $\gamma_j$ outputted for this signature given these inputs,
    $\deg(\gamma_j) \leq a^T_j \deg(\alpha)+b^T_j \deg_m(\beta)+c^T_j$.
\end{proposition}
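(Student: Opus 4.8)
The plan is to separate two independent contributions to the degree of the output morphism $\gamma_j$: the degree injected by the auxiliary morphisms $\beta_i$ that get substituted into the recipe produced by the transducer, and the $\mathcal{D}$-degree ``overhead'' carried by the genuine $\mathcal{D}$-morphisms appearing in that recipe. Finiteness of $S(B)$ will bound the first by a constant multiple of $\deg_m(\beta)$, while the filtered-functor property of $F$ will bound the second linearly in $\deg(\alpha)$.

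First I would set $\ell := \deg(F(\alpha))$; since $F$ is a filtered (enriched) functor, as noted after Definition~\ref{def:filteredfunctor} we have $\ell \le \deg(\alpha)$. Fixing an initial state $x \in S(A)$ with resulting output state $y := S(F(\alpha))(x) \in S(B)$, Definition~\ref{def:filteredstatecategory} tells me that the morphism $w_j$ generated by $V_{F(\alpha)}(x)$ for the output variable $X_j' \to Y_j'$ (of degree $a_j' x + b_j'$ at $y$) lies in the freely-generated $\mathbb{N}^2$-filtered category over $\mathcal{D}$ and has $\mathbb{N}^2$-degree at most $a_j' x + (a_j'\ell + b_j')$. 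Then I would read off the multiplicities: writing $m_{ij}$ for the number of occurrences in $w_j$ of the generator for the auxiliary signature $X_i \to Y_i$ (of degree $a_i x + b_i$, with $a_i \ge 1$ by hypothesis), and letting $d_j$ collect the $\mathcal{D}$-degrees of the maximal $\mathcal{D}$-morphism chains of $w_j$, Definition~\ref{def:freegenfilteredmorphism} makes the linear coefficient of $\deg(w_j)$ equal to $\sum_i m_{ij} a_i$ and its constant coefficient equal to $\sum_i m_{ij} b_i + d_j$. Comparing with the bound above gives $\sum_i m_{ij} a_i \le a_j'$ and $\sum_i m_{ij} b_i + d_j \le a_j'\ell + b_j'$. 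Since every $a_i \ge 1$, the first inequality yields $\sum_i m_{ij} \le a_j'$ (the quantitative content of Proposition~\ref{prop:boundedduplication}), and the second yields $d_j \le a_j'\ell + b_j' \le a_j'\deg(\alpha) + b_j'$.

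Finally, $\gamma_j$ arises from $w_j$ by replacing each generator occurrence with the corresponding $\beta_i$. Because $\mathcal{D}$ is enriched over $\mathbf{FiltSet}$, degree is subadditive under both composition and monoidal product, so regrouping the surviving $\mathcal{D}$-morphism chains gives $\deg_{\mathcal{D}}(\gamma_j) \le \sum_i m_{ij}\deg_{\mathcal{D}}(\beta_i) + d_j$. Bounding $\deg_{\mathcal{D}}(\beta_i) \le \deg_m(\beta)$ and inserting the two estimates above produces $\deg(\gamma_j) \le a_j'\deg(\alpha) + a_j'\deg_m(\beta) + b_j'$. Since $S(B)$ is finite and each state carries finitely many variables, replacing $a_j'$ and $b_j'$ by their maxima over all output states gives fixed integers $a_j^T, b_j^T, c_j^T$ independent of $\alpha$ and $\beta$, which is the claim.

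The step I expect to require the most care is this last substitution: I must justify that the constant part $d_j$, computed from the ``simplified'' composite degrees of $\mathcal{D}$-chains in Definition~\ref{def:freegenfilteredmorphism}, genuinely upper-bounds the $\mathcal{D}$-degree overhead of $\gamma_j$ after substitution, i.e. that replacing generators by the $\beta_i$ in place cannot recombine distinct $\mathcal{D}$-chains so as to escape the bound. This follows from subadditivity but should be spelled out, as should the clean separation of the linear (generator-count) and constant ($\mathcal{D}$-degree) parts of the $\mathbb{N}^2$-degree that underlies the whole argument.
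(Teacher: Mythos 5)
Your proposal is correct and follows essentially the same route as the paper's proof: finiteness of $S(B)$ supplies uniform maxima $a_{\max}'$, $b_{\max}'$ for the output-variable degrees, the hypothesis that input variables have linear term at least $1$ bounds the generator multiplicities by $a_j'$ (the content of Proposition~\ref{prop:boundedduplication}), and the filtered-functor inequality $\deg(F(\alpha))\leq\deg(\alpha)$ combined with the degree bound in Definition~\ref{def:filteredstatecategory} controls the constant part, yielding the same bound $a_{\max}'\deg_m(\beta)+a_{\max}'\deg(\alpha)+b_{\max}'$. Your version merely makes the separation of the linear and constant components of the $\mathbb{N}^2$-degree more explicit than the paper does.
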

\begin{proof}
For every
output signature $X_j'\to Y_j'$, there exists a state $y \in S(B)$
such that $\deg_{B,y}(X_j'\to Y_j')$
attains some maximal linear
coefficient $a_{\max}'$, the generators corresponding
to auxiliary inputs of $T$ can appear in $X_j'\to Y_j'$
at most some fixed
$a_{\max}'$ number of times. Additionally,
a similar maximum $b_{\max}'$ can be found for the constant term
of $\deg_{B,y}(X_j'\to Y_j')$. Since, by the definition of a filtered
functor, $\deg(F(\alpha)) \leq \deg(\alpha)$, and by the definition
of the state category, the $\mathbb{N}^2$-degree of the output at
$X_j'\to Y_j'$ can be at most $a_j'x+a_j'\deg(F(\alpha))+b_j'$, then
$\deg(\gamma_j) \leq a_{\max}'\deg_m(\beta)+a_{\max}'\deg(\alpha)+
b_{\max}'$.
\end{proof}

\begin{corollary}
\label{corr:linearruntime} Suppose a string machine composed
only of a finite set of filtered transducers $\{T_k\}_{1\leq k
\leq N}$, where
each transducer $T_k$ has input state space $S(A_k)$ and output
state space $S(B_k)$ such that every variable at every state in
$S(A_k)$ has a degree with $1$ or greater linear term, and
$S(B_k)$ is finite. Denote by $\alpha_k'$ the morphism that
$T_k$ will receive as its primary input morphism, given overall
input morphisms $\alpha_1,\dots,\alpha_n$ and
input states $x_1,\dots,x_m$ to the string machine.  Then,
$\sum_{k=1}^N \deg(\alpha_k') = O\left(\sum_{i=1}^n
\deg(\alpha_i)\right)$.
\end{corollary}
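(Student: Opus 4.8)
The plan is to treat the string machine as a finite directed acyclic graph whose vertices are the transducers $T_1,\dots,T_N$ and whose edges (``legs'') carry $\mathcal{D}$-morphisms, each with a well-defined degree. The source legs carry the overall input morphisms $\alpha_1,\dots,\alpha_n$, and by the contraction rule defining composition, every primary and auxiliary input leg of each $T_k$ is either a source leg or an output leg of some other transducer. Because the contraction is required to create no cycles, I can fix a topological order on the transducers so that every input leg of $T_k$ is either a source leg or the output leg of a transducer appearing strictly earlier. Since $\sum_{k=1}^N \deg(\alpha_k')$ is a sum of $N$ leg-degrees, it suffices to bound the degree of each individual leg by a fixed constant multiple of $\Sigma := \sum_{i=1}^n \deg(\alpha_i)$, up to an additive constant. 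Note that the state inputs $x_1,\dots,x_m$ play no role here, since the bounds we invoke are uniform over starting states.

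First I would apply Proposition~\ref{prop:finstateslinearoutput} to each $T_k$: since each $S(B_k)$ is finite and every variable at every state of $S(A_k)$ has linear term at least $1$, every output morphism $\gamma$ produced by $T_k$ satisfies $\deg(\gamma) \le a^{T_k}\deg(\alpha_k') + b^{T_k}\deg_m(\beta_k') + c^{T_k}$, with constants independent of the input degrees and of the starting state. Setting $C := \max_k \max(a^{T_k}, b^{T_k}, c^{T_k}, 1)$, a single uniform constant over the finitely many transducers, every output leg of every transducer then has degree at most $C$ times the largest degree among that transducer's input legs, plus $C$.

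Next I would bound each leg-degree by induction on its topological depth $d(e)$, the largest number of transducers on any directed path from a source leg to $e$. Letting $P_d$ denote a uniform upper bound on all leg-degrees at depth $d$, the source legs give $P_0 = \Sigma$, while the uniform per-transducer estimate of the previous step yields the recurrence $P_d \le 2C\,P_{d-1} + C$, the factor $2C$ arising because both the primary and the maximal auxiliary input each contribute a term bounded by $P_{d-1}$. Solving this geometric recurrence gives $P_d \le (2C)^{d+1}(\Sigma + 1)$. Since the graph is finite and acyclic, $d(e) \le N$ for every leg, so every leg degree — in particular each $\deg(\alpha_k')$ — is at most $(2C)^{N+1}(\Sigma+1)$. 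Summing over the $N$ primary-input legs gives $\sum_{k=1}^N \deg(\alpha_k') \le N(2C)^{N+1}(\Sigma + 1) = O(\Sigma)$, the constant $N(2C)^{N+1}$ depending only on the fixed string machine.

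The main obstacle, and the place where the hypotheses do real work, is the acyclicity of the composition: it is precisely what bounds the depth by $N$ and hence forces the per-transducer multiplicative blowup to compound only finitely often, producing a constant (albeit exponential in $N$) rather than an unbounded factor — a cyclic dependency would let a degree re-enter a transducer and grow without bound, destroying linearity. A secondary point to handle carefully is that the estimate of Proposition~\ref{prop:finstateslinearoutput} carries additive constants $c^{T_k}$, so the final bound is linear-plus-constant; this is absorbed into $O(\cdot)$ exactly because $N$ and the transducer coefficients are fixed independently of $\alpha_1,\dots,\alpha_n$ (one may restrict to $\Sigma \ge 1$ for a pure multiplicative bound). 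Finally, one should observe that an output leg routed via the copy structure into several transducers merely reproduces an already-bounded degree and does not alter the count of $N$ primary-input terms.
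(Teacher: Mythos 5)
Your proof is correct and follows essentially the same route as the paper's: repeatedly apply Proposition~\ref{prop:finstateslinearoutput} along the acyclic composition order so that each internal leg degree is bounded by a linear function of the overall input degrees, with the multiplicative factors compounding only finitely often. Your version merely makes the paper's informal induction (stated there for two transducers and then extended by ``any further composition\dots'') precise via a topological depth and an explicit recurrence, which is a welcome tightening but not a different argument.
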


\begin{proof}
Suppose two transducers $T_1,T_2$, where $T_1$ receives a primary
input morphism $\alpha$ and auxiliary input morphisms
$\beta_1,\dots,\beta_m$. Therefore, by Proposition~\ref{prop:finstateslinearoutput}, there exist integers
$a_1,b_1,c_1$ such that, for a given output $\gamma$ produced by $T_1$,
the degree of $\gamma$ satisfies
$\deg(\gamma) \leq a_1\deg(\alpha)+b_1\deg_m(\beta)+c_1$.
Now, if the primary input of $T_2$ is not connected to an output of
$T_1$, then the string machine composed of these transducers satisfies
the proposition, as the inputs of $T_2$ are now included in the overall
inputs of the string machine. If $\gamma$ is composed into the primary
input of $T_2$, however, then $\deg(\alpha_2') \leq
a_1\deg(\alpha)+b_1\deg_m(\beta)+c_1$. Given similar coefficients
$a_2,b_2,c_2$ for a given output $\gamma_2$ of $T_2$, we can see that
$\deg(\gamma_2) \leq a_2(\deg(\alpha_2'))+b_2\deg_m(\beta_2)+c_2$, where
$\deg_m(\beta_2)$ is bounded from above by either the same kind of
linear sum as the one for $\gamma$ (if the greatest degree among the
auxiliary inputs comes from an output of $T_1$), or the degree of
yet another additional input to the overall string machine. Therefore,
any further composition of transducers $T_3,\dots$ will result in
$\alpha_3',\dots$ also being bounded by a linear sum of the
degrees of the overall
inputs of the string machine, meaning that the sum
$\sum_{k=1}^N \deg(\alpha_k')$ is also bounded by a linear sum of these
degrees.
\end{proof}
Corollary~\ref{corr:linearruntime} can be used to bound the runtime of
string machines, if, fixing some model of computation, we
have a bound on the time it takes to run a single transducer as
a function of its inputs. We note
that the coefficients bounding output size are multiplied together upon
composition into either an auxiliary input or a primary input, but
only composition into a primary input ties the degree
of the state transition that will occur in the
second transducer to the primary input degree of the first
transducer. We define the sum bounded in the corollary as follows:
\begin{definition}
    \label{def:totalinput} Take a string machine $K$ with input
    signatures $X_1\to Y_1, \dots X_n\to Y_n$ and input state
    spaces $S(A_1),\dots,S(A_m)$, where the transducers forming
    $K$ form the set $\{T_k\}_{1\leq k \leq N}$. Denote
    by $\alpha_k'$ the morphism that the transducer $T_k$
    will receive as its primary input morphism, given
    inputs $\alpha_1,\dots,\alpha_n$
    ($\overline{\alpha}$ for short) and $x_1,\dots,x_m$ 
    ($\overline{x}$ for short) to $K$.
    We define the
    \textbf{internal total primary input degree} of $K$ to be
    $\mathrm{IPD}_K(\overline{\alpha},\overline{x}):=\mathrm\sum_{k=1}^N \deg(\alpha_k')$. 
\end{definition}
While we have established that, lacking a meta-vertex,
$\mathrm{IPD}_K$ will be linearly bounded
by the total degree of its
inputs, its scaling properties as a function of
the number of transducers composed has no such guarantees. One can
observe that composing more and more transducers that double
their input size together causes exponential growth
in $\mathrm{IPD}_K$. As we are
interested in the runtime properties of both string machines that
incorporate a meta-vertex and of parameterized families of string
machines, we must obtain some sufficient conditions for
$\mathrm{IPD}_K$ to
grow polynomially in the number of transducers in
our string machine. To help determine these conditions,
we will define the output-degree-bounding parameters referenced in
Proposition~\ref{prop:finstateslinearoutput}:
\begin{definition}
    \label{def:outputdegree} Take a filtered deterministic transducer
    $T$
    with a finite output state space $S(B)$ and an output signature
    $X \to Y$ of $T$. Given an input state $x$, primary input
    morphism
    $\alpha$, and auxiliary input morphisms $\beta_1,\dots,\beta_m$
    to $T$,
    denote by $\gamma$ the output morphism produced for $X\to Y$.
    We define the \textbf{output degree}
    $\deg_o^T(X\to Y)$ of the output $X\to Y$
    of $T$ to be the integer triple $(a,b,c)$ such that the relation
    described in Proposition~\ref{prop:finstateslinearoutput},
    $\deg(\gamma) \leq a \deg(\alpha)+b \deg_m(\beta)+c$, holds for
    all possible $\alpha$ and $\beta_1,\dots,\beta_m$ (and the
    $\gamma$ produced by these inputs). Given the minimal linear
    term $a_{\min}$ (at least $1$) that the degree of any variable
    in any state in the input state space of $T$, and the maximal
    linear term $a_{\max}'$ and
    constant term $b_{\max}'$ for the degree of the variable in the output
    state space corresponding to $X\to Y$, we can unambiguously
    define $\deg_o^T(X\to Y) = (a_{\max}',\lfloor a_{\max}'/a_{\min}\rfloor,
    b_{\max}')$.
\end{definition}
We can now prove some elementary sufficient conditions for
$\mathrm{IPD}_{K_N}$ for a family $\{K_N\}_{N\leq 1}$ of string
machines to grow polynomially in $N$.
\begin{proposition}
    \label{prop:linearruntimeintransducers}
    Suppose a family of string machines $\{K_N\}_{N \geq 1}$, where
    $K_1$ is composed of a single filtered
    deterministic transducer with a finite
    output state space, and for $N>1$,
    $K_N$ is created by composing some of the outputs of $K_{N-1}$
    into the inputs of a new filtered deterministic transducer
    with a finite output state space,
    such that the
    remaining uncomposed input legs of each $K_N$ are the same for
    all $N$. Given the output
    $X_N\to Y_N$ of $K_{N-1}$ composed into the primary
    input of the new transducer added to obtain $K_N$, we designate
    the transducer that that output is produced from as $T_N'$, and
    write $\deg_o^{T_N'}(X_N\to Y_N) = (a_N,b_N,c_N)$.
    Now, fixing any input states and morphisms,
    $\mathrm{IPD}_{K_N} = O(N^h)$ for some $h$, regardless
    of the value of the inputs fixed, if the following
    conditions are met:
    \begin{itemize}
    \item We have that $c_N = O(N^{k})$, for some fixed $k$.
    \item We have that $a_N+b_N = O(1)$.
    \item The number of times that
    $a_N+b_N > 1$ is $O(\log N)$.
    \end{itemize}
\end{proposition}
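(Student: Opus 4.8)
The plan is to reduce the statement to a one-dimensional recurrence for the largest degree appearing in $K_N$ and then control the accumulated multiplicative blowup using the three hypotheses. Concretely, I would let $M_N$ denote the maximum degree of any morphism (primary input, auxiliary input, or produced output) occurring anywhere in $K_N$ under the fixed inputs, with $M_0$ the (constant) maximum degree of the external inputs, which by assumption do not change with $N$. Since $K_1$ consists of a single transducer with finite output state space, Proposition~\ref{prop:finstateslinearoutput} gives $M_1 = O(1)$.

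The central step is to establish the recurrence $M_N \le (a_N + b_N) M_{N-1} + c_N$. The one new morphism driving the growth from $K_{N-1}$ to $K_N$ is the output $\gamma_N$ of $T_N'$ fed into the primary input of the transducer added at step $N$; by Definition~\ref{def:outputdegree} and Proposition~\ref{prop:finstateslinearoutput}, $\deg(\gamma_N) \le a_N \deg(\alpha_{T_N'}) + b_N \deg_m(\beta_{T_N'}) + c_N$, where $\alpha_{T_N'}$ and the $\beta_{T_N'}$ are the primary and auxiliary inputs of $T_N'$. Because both $\deg(\alpha_{T_N'})$ and $\deg_m(\beta_{T_N'})$ are degrees of morphisms already present in $K_{N-1}$, each is bounded by $M_{N-1}$, yielding $\deg(\gamma_N) \le (a_N + b_N)M_{N-1} + c_N$. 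The only subtlety, and the step I expect to be the main obstacle, is ensuring that no additional multiplicative factor enters through the auxiliary wiring: an output of some $T_i$ routed into a later auxiliary leg carries that transducer's own, a priori unconstrained, output-degree coefficients. I would handle this by always bounding a produced morphism's degree at the moment it is consumed as a primary input, where the governing coefficients are exactly the hypothesis-controlled $(a_k,b_k,c_k)$, rather than at the moment it is produced, so that every multiplicative contribution to $M_N$ is accounted for by some $a_k + b_k$. This is precisely what lets the single recurrence above stand in for the whole dependency DAG.

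With the recurrence in hand, I would unroll it to
\[
M_N \le \Bigl(\prod_{i=2}^N (a_i+b_i)\Bigr) M_1 + \sum_{j=2}^N \Bigl(\prod_{i=j+1}^N (a_i+b_i)\Bigr) c_j ,
\]
and bound every product of multipliers uniformly. Factors with $a_i + b_i \le 1$ only shrink a product, so $\prod_{i=j+1}^N (a_i + b_i) \le \prod_{i \,:\, a_i + b_i > 1}(a_i + b_i)$. By the second hypothesis each such factor is at most a constant $C$, and by the third there are at most $O(\log N)$ of them, so every product is at most $C^{O(\log N)} = N^{O(\log C)}$, a fixed power of $N$. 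Combining this with the first hypothesis $c_j = O(N^k)$ shows each of the $N$ summands is $O(N^{h'})$ for a fixed $h'$, whence $M_N = O(N^{h'+1})$.

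Finally, since each primary input degree $\deg(\alpha_k')$ is one of the morphism degrees counted by $M_N$, we obtain $\mathrm{IPD}_{K_N} = \sum_{k=1}^N \deg(\alpha_k') \le N \cdot M_N = O(N^{h'+2})$, giving the claim with $h = h'+2$. The bound is independent of the fixed input values because $M_1$ and the external degree are constants absorbed into the $O(\cdot)$, so the polynomial exponent depends only on $C$, $k$, and the constants hidden in the $O(\log N)$ bound.
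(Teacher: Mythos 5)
Your proposal is correct and follows essentially the same route as the paper's proof: a per-step linear recurrence in which each newly attached transducer either adds $O(N^k)$ to the running degree or multiplies it by an $O(1)$ factor, with the $O(\log N)$ bound on the number of multiplicative steps giving a cumulative factor $C^{O(\log N)} = N^{O(1)}$, followed by a sum over the $N$ steps. The only cosmetic difference is that you track the maximum morphism degree $M_N$ and pay a factor of $N$ at the end, whereas the paper unrolls the recurrence for $\mathrm{IPD}_{K_N}$ directly; your write-up, including the explicit unrolled product-sum bound, is if anything the more careful of the two.
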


\begin{proof}
Omitting
our input morphisms and states,
we have that $\mathrm{IPD}_{K_N} = \mathrm{IPD}_{K_{N-1}}+\gamma_N$, where
$\gamma_N$ is the output of $K_{N-1}$ composed into the new transducer's
primary input. $\mathrm{IPD}_{K_1}$
is obviously simply the degree of our initial
primary input.
$T_N'$ either adds $O(N^k)$ degrees' worth
of material to the new transducer's
primary input compared to $\mathrm{IPD}_{K_{N-1}}$, or at most
multiplies $\mathrm{IPD}_{K_{N-1}}$ by some fixed amount
and then adds this extra degree. Since subsequent multiplications
of this sum can be bounded from above by doing all the additions first
first and then multiplying the result by the cumulative
multiplication thus far, the function
bounding $\mathrm{IPD}_{K_N}$ can be expressed as a sum
$$\sum_{i=1}^N O(1)^{O(\log i)}(n+iO(i^{k})) \leq N(N^{O(1)}(n+O(N^{k+1})))
=N^{\ell+1}n+O(N^{\ell+k+2})
$$
where $\ell$ is the constant coefficient affording the $O(\log i)$ bound,
and $n$ is the total degree of the input morphisms to $K_N$.
\end{proof}

\begin{figure}[h]
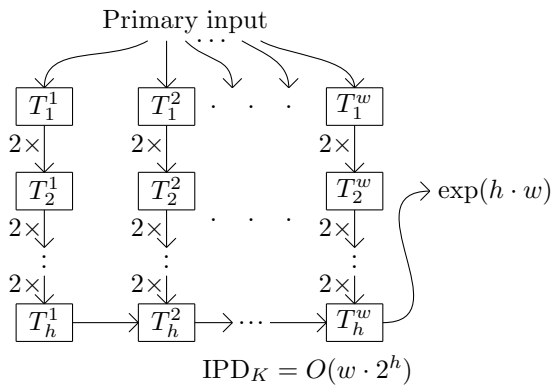

    \centering
    \ctikzfig{transducerdrawing1}
    \caption{Assuming the runtime of a state-category morphism is
    proportional to its degree, we must impose either a constant or
    logarithmic bound on the number of primary-input compositions where
    the output is multiplied to preserve runtime that is polynomial
    in the total number of transducers. In the figure, vertical
    arrows go to a transducer's primary input, while horizontal ones
    go to its auxiliary input.}
    \label{fig:boundedheightstacks}
\end{figure}

\begin{remark}
We have thus far only considered filtered transducers where the
linear term of every variable in the input state space is non-vanishing.
If we change this condition to only require that the constant term
of the variable be non-vanishing, then the total number of generators
in the $\mathbb{N}^2$-filtered category over the output category
produced in the output is still bounded by a linear function of the
primary input signature. So, the degree of the output function is now
linear in the \textit{product} of the degrees of the primary and
auxiliary inputs. One can observe, however, that any composition of
transducers will only cause outputs to be multiplied with each other
a fixed number of times, so $\mathrm{IPD}$ will now be bounded by
a polynomial in the total inputs of the string machine.
\end{remark}

\subsubsection{String machine description length}
As string machines can themselves be thought of
as morphisms in a category, and we are interested in extending
our complexity constraints to the case where the meta-vertex is
involved, we must imbue the category that string machines are
morphisms in with a filtered category structure itself. While this
could be accomplished by setting the degree of a string machine to just
be the number of transducers involved (where
a no-transducer string machine is just the identity and has degree
zero),
this obviously fails to capture the internal complexity of a transducer.
\\\\
If we require that the input category of every non-meta-vertex
transducer be finitely-generated, then we could make reference to the
size of some fixed graph representation of a transducer's transition
table and output functions corresponding to generators,
and then have this size be additive when transducers are composed.
However, as discussed in the previous section, this description
size does not constrain runtime for transducers which build other
transducers and then run them. We discuss possible approaches in
Section~\ref{sec:desccomplex}.

\subsection{The expressivity of finite-state string machines}
Now that we have obtained guarantees for the scaling of
$\mathrm{IPD}$, which can be thought of as an analogue to the
runtime of a string machine, in both the number of transducers
in a string machine and in input size, we focus on characterizing the
expressivity of string machines consisting of
\textbf{finite-state deterministic
transducers}, that is, filtered deterministic transducers where
the state space corresponding to every element of the input category
(not only the output state space)
is finite.
\\ \\
In the following section, we will assume that all transducers
have an input category that is of a form described as follows:
\begin{definition}
    \label{def:tapecategory} A \textbf{tape category} is a
    filtered-morphism
    copy-discard category $\mathcal{X}$
    freely generated over a single
    object $X$ and a finite number of morphisms of the form
    $X^m \to X^n$ (where superscripts denote monoidal self-products),
    where every generator morphism has positive degree. 
    If all our generator morphisms are of signature
    $X^m\to X$, then
    a tape
    category is the copy-discard category freely generated over the
    symmetric monoidal category freely generated from the multicategory
    with one object
    freely generated from those generators,
    by way of the functor from multicategories
    to monoidal categories proven by Hermida~\cite{hermida2000representable}
    to be an adjoint to the forgetful functor in the other
    direction.
\end{definition}
We can treat tapes as a special case of tape category, where
all generators are endomorphisms of the generating object.
\subsubsection{Without a meta-vertex}
Finite-state string machines, even without a meta-vertex, can prepend
to their output, which is not something the traditional definition
of finite-state transducer can do. It is natural to
investigate if this makes the
framework more expressive than a
deterministic finite automaton, e.g., having it
recognize palindromes by reversing a string and then feeding the product
of these two (their overlaying) into a transducer that then checks if
each character matches up.
\\ \\
Unfortunately, given morphisms
$f:A\to B$ and $g:C\to D$ in an input category (not necessarily
a tape category)
$\mathcal{C}$, and their product $f\otimes g$, we can see that,
since $F(f\otimes g) = F(f\otimes \mathrm{id}_D) \circ F(\mathrm{id}_A
\otimes g)=F(\mathrm{id}_B \otimes g)\circ F(f \otimes \mathrm{id}_C)$,
a state transition on a collection of finite state spaces induced 
by the product of two morphisms can only capture the same sort of information
that could be captured by independently running two finite-state
transducers on both morphisms and comparing their states at the very end.
\\ \\
Thus, ``per-character'' comparisons of the sort that would require
conditions analogous to
aligning characters on tape would require infinite state spaces to
represent the equivalent memory operations being performed. However, due
to the nature of the auxiliary input, as
well as the aforementioned prepending
capabilities, we are clearly able to model
functions that finite state transducers cannot, even without a meta-vertex.
\\ \\
To characterize the class of decision
problems computable by non-meta-vertex
finite-state string machines, we first prove the following lemma:
\begin{lemma}
    \label{lem:fsmprepending} Suppose some finite-state transducer
    $T$ with
    primary input signature
    $X\to X$ in a tape category $\mathcal{X}$, structure
    functor $F$, and output state space
    $S(A)$, such that these are its only inputs and outputs.
    Fix a finite set of morphisms $\{g_1,\dots,g_n:X\to X\}$ in
    the freely-generated $\mathbb{N}^2$-filtered category over
    $\mathcal{X}$ with a single generating signature
    $X\to X$ of degree $x$, such that the linear coefficient of
    the degree of each $g_i$ is at most $1$ (that is, the generator
    corresponding to $X\to X$ appears at most once).
    Suppose
    we are able to construct a
    primary input morphism $g:X\to X$
    for this transducer in the following
    way:
    \begin{itemize}
        \item Our starting value for $g$ is $\mathrm{id}_X$.
        \item We can replace $g$ with
        with one of the previously-given $g_i$,
        with $g$ substituted for the
        generator morphism. We can perform this step a finite
        of times.
    \end{itemize}
    Then, given some state transition reached as a result of some
    such $g$,
    the amount of memory required to compute the output state of
    the transducer is $O(1)$ in the degree of $g$.
\end{lemma}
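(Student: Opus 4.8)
The plan is to treat the construction of $g$ as a streaming computation that maintains only the induced state-transition function, exploiting that $F$ is strong monoidal and that $S(A) := S(F(X))$ is finite, so that the set of possible state transitions $S(A)^{S(A)}$ has fixed, constant size. First I would record the basic data: since $T$ is finite-state, $S(F(X))$ is a finite set $S(A)$, and by strong monoidality $S(F(X^{n})) \cong S(A)^{n}$. For any $\mathcal{X}$-morphism $h : X \to X$, the image $F(h)$ is a morphism of $\mathcal{D}^S$ whose state transition $S(F(h))$ lies in the finite set $S(A)^{S(A)}$. The output state reached on input $g$ is, by definition, the value of $S(F(g))$ at the given starting state, so it suffices to compute the single element $S(F(g)) \in S(A)^{S(A)}$.

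The key step is to show that each generator $g_i$ induces a \emph{fixed} self-map $\Phi_i : S(A)^{S(A)} \to S(A)^{S(A)}$ with $S(F(g_i(g))) = \Phi_i(S(F(g)))$ for every $g$. Because the generating signature appears at most once in $g_i$, the single occurrence of the hole can be isotoped to its own horizontal slice in the symmetric monoidal category $\mathcal{X}$, so that $g_i = u_i \circ (\mathrm{id}_{X^{a}} \otimes h \otimes \mathrm{id}_{X^{b}}) \circ v_i$ for fixed $\mathcal{X}$-morphisms $u_i, v_i$ (or $g_i(g) = g_i$ is constant when the hole is absent). Applying the strong monoidal $F$ and passing to state transitions — using that in Definition~\ref{def:filteredstatecategory} the composite and monoidal state transitions are built purely from the component state transitions — yields $\Phi_i(T) = S(F(u_i)) \circ (\mathrm{id} \times T \times \mathrm{id}) \circ S(F(v_i))$, which depends only on $T = S(F(g))$ and the fixed data of $g_i$. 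This is exactly where the hypothesis that the hole appears at most once is used: it guarantees the layered decomposition exists and also forces $\deg(g)$ to grow linearly (rather than exponentially) in the number of construction steps, so the construction really is a tape-like linear chain.

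I would then assemble the incremental algorithm. Precompute the finitely many tables $\Phi_1,\dots,\Phi_n$, each of constant size since $S(A)^{S(A)}$ is fixed. Initialize an accumulator $\sigma := \mathrm{id}_{S(A)} = S(F(\mathrm{id}_X))$, and for each construction step that replaces $g$ by $g_i(g)$, update $\sigma \leftarrow \Phi_i(\sigma)$. By the key step and induction on the number of steps, $\sigma = S(F(g))$ at the end, and evaluating $\sigma$ at the starting state gives the output state. The working memory consists of a single element of the fixed finite set $S(A)^{S(A)}$ together with the constant-size transition tables, hence $O(1)$ and in particular independent of $\deg(g)$.

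The main obstacle is the second step: verifying rigorously that $S(F(g_i(g)))$ factors through $S(F(g))$ \emph{alone}, rather than through the full $\mathcal{D}^S$-morphism $F(g)$, which also carries output-function data. This rests on carefully unwinding the composition and monoidal-product clauses of Definition~\ref{def:filteredstatecategory}, where state transitions are defined independently of the output functions, and on using strong monoidality of $F$ to absorb the coherence isomorphisms into the fixed maps $S(F(u_i))$ and $S(F(v_i))$. Handling the placement of the hole under the monoidal product — the $\mathrm{id}_{X^{a}} \otimes - \otimes \mathrm{id}_{X^{b}}$ bookkeeping, including any symmetry braidings needed to bring the hole onto its slice — is the one spot that requires genuine care.
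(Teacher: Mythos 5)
Your proposal is correct and is essentially the paper's argument: the paper's proof likewise reduces everything to maintaining the induced transition function on the finite state space (phrased there as ``running $|S(A)|$ copies of the automaton, one per possible starting state'') and handles the monoidal placement of the hole via the identity $F(f\otimes g)=F(f\otimes\mathrm{id})\circ F(\mathrm{id}\otimes g)$ stated just before the lemma. Your uniform decomposition $g_i = u_i\circ(\mathrm{id}_{X^a}\otimes -\otimes\mathrm{id}_{X^b})\circ v_i$ with precomputed update maps $\Phi_i$ on $S(A)^{S(A)}$ is a cleaner packaging of the paper's four-way case analysis on the shape of the $g_i$, but the underlying invariant and the appeal to the finiteness of $S(A)^{S(A)}$ are the same.
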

\begin{proof}
We will assume that one ``computational step'' involves performing one
of the operations involved in constructing $g$ described above. We
will demonstrate that computing the next state after
any of the operations performed requires storing an amount of information
that is $O(1)$ in the degree of $g$.
\begin{itemize}
\item If one of the $g_i$ does not have the generator morphism appear
at all, then $S(A)$ after this $g_i$ has replaced $g$ depends solely
on the $g_i$ used, and since the set of such $g_i$ is finite, it takes
a finite amount of space to store the state reached from the starting
state when one of these $g_i$ are evaluated. 
\item If one of the $g_i$ only involves post-composing an endomorphism
of $X$ into the generator, then storing the state reached from a given
state after any of these endomorphisms is post-composed takes a fixed
amount of memory, since the set of $g_i$ is finite.
\item If one of the $g_i$ involves only pre-composing and post-composing
endomorphisms of $T$ into $g$, then computing the resultant state after
the action of one of these $g_i$ only requires us to store $|S(A)|$
extra ``possible current states.'' That is, if we know what state we
would be in after the evaluation of $g$ had we started at any other
one of the $|S(A)|$ states of the transducer, then after computing
which new starting state we would have started at (which, since the
set of endomorphisms we pre-compose is finite, is equivalent to storing
a fixed-size transition table), we can then pick
the state that the copy of the transducer that started at that state
would be at after the action of $g$. This is therefore equivalent
to running $|S(A)|$ copies of a finite state automaton, where each
copy starts at a different state .
\item Suppose one of the $g_i$ is obtained by taking a morphism
$\alpha:X\to X^n$, post-composing morphisms into each branch (such that
one branch includes the generator), and then finally post-composing
a morphism $\beta:X^n \to X$. Even if
multiple ``branchings'' occur, since the generator where $g$ will
be substituted appears only once, we can still decompose this
into a morphism obtained by
first taking the monoidal product of $g$ with a number of fixed
morphisms such that the overall signature of their product is
$X^n \to X^n$, and then composing in $\alpha$ and $\beta$.
Now, we only need
$O(1)$ space to store the functions $S(F(X)) \to S(F(X^n))$ and
$S(F(X^n)) \to S(F(X))$ induced by $\alpha$ and $\beta$, since these
are functions between finite sets. Additionally, for the function
$S(F(X^n)) \to S(F(X^n))$ induced by the monoid product of $g$
with with a fixed morphism
$\gamma$ of signature $X^{n-1} \to X^{n-1}$, by
the property discussed at the start of this section, we only need to
be keeping track of the action of $F(g \otimes \mathrm{id}_{X^{n-1}})$
on $S(F(X^n))$ (which, since there are a finite number of
$g_i$, only requires keeping an absolutely bounded number of automata
running), and then compute the action of
$F(\mathrm{id}_X\otimes \gamma)$ on the result, which, as a function
with a finite domain and codomain, takes finite space to store.
\end{itemize}
Therefore, to compute the state transition induced by $g$, we only
need to keep $|S(A)|$ copies of the finite state automaton running,
as well as some fixed number of
finite-state automata with input and output
state spaces corresponding to
$S(F(X^n))$ for some fixed $n$.
\end{proof}
With this lemma, it now remains to show, roughly speaking, that
for any string machine made out of
finite-state
deterministic transducers which decides some language (represented by
morphisms in a tape category), the computation performed at every
input morphism can be broken into a constant-bounded number of transitions
of the type described in the lemma. We present a conjecture that
finite-state string machines without a meta-vertex can only recognize
regular languages, followed by a sketch for a proof of this conjecture.
\begin{conjecture}
    \label{prop:constantspacefortapecategories}
    Suppose a string machine composed only of finite-state transducers,
    where the input and output categories of each transducer are
    tape categories. Suppose this string
    machine has only one free input $X\to X$ in
    some tape category $\mathcal{X}$ where all generating
    morphisms (which form a set
    $\Sigma$) are endomorphisms of
    the generating object $X$, and has as its
    sole output a state space $S(A)$, with some subset of $S(A)$
    designated as accepting. Then, taking an endomorphism
    $X\to X$ in
    $\mathcal{X}$ which does not include any instances
    of the copy morphism to be a string over the alphabet $\Sigma$,
    any language over $\Sigma$
    accepted by this string machine is regular.
\end{conjecture}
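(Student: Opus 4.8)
\begin{proofsketch}
The plan is to simulate the string machine by a machine that reads the input $w$ as a read-only two-way tape using only $O(1)$ additional work space, and then to invoke the classical fact that the languages decidable in constant space (equivalently, by a two-way deterministic finite automaton) are exactly the regular languages. Since $w$ is an endomorphism $X\to X$ of $\mathcal{X}$ containing no copy morphisms, it is literally a composite of generators $\sigma_1,\dots,\sigma_{|w|}\in\Sigma$, and I would treat this construction sequence as the string on the tape.

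First I would order the transducers of the machine by a topological order of the underlying directed graph, which exists because contraction forms no cycles. The induction hypothesis is that, for the sub-machine built from the first $r$ transducers, every intermediate morphism it produces — whether forwarded as a primary or an auxiliary input — can be emitted as a streamed construction sequence by a finite-state process that reads $w$ a number of times bounded independently of $|w|$ and keeps only $O(1)$ work space, and that every state reached is computed within the same bounds. The base case is a single transducer whose primary input is a construction over $w$, which is precisely Lemma~\ref{lem:fsmprepending}: its output state is computed in $O(1)$ space by running $|S(A)|$ copies of a finite automaton alongside a fixed number of auxiliary finite-state automata.

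For the inductive step I would attach a new transducer $T$ whose inputs are outputs of the existing sub-machine (or the free input $w$). Computing the output state of $T$ needs only a single pass over its primary input; by the hypothesis that input is itself a streamed construction on $O(1)$ space and boundedly many passes over $w$, so composing the two finite-state processes keeps the state computation within $O(1)$ space and a bounded number of passes. Emitting the outputs of $T$ then consists of writing one of finitely many fixed recipes, indexed by the now-known output state of $T$, with the streamed auxiliary inputs substituted into their slots; each recipe is a fixed morphism, so every auxiliary input occurs in it a bounded number of times and this contributes only a constant factor to the pass count. The delicate point, handled exactly as in the final case of the proof of Lemma~\ref{lem:fsmprepending}, is that these intermediate morphisms need not be pure strings but may contain monoidal branchings and copies; here the factorization $F(f\otimes g)=F(f\otimes\mathrm{id})\circ F(\mathrm{id}\otimes g)$ is what lets us sequentialize parallel branches, processing one factor completely and then the other while keeping alive only a bounded number of ``possible current states'' of each automaton, and each copied sub-morphism is simply re-streamed at the cost of another constant factor.

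Multiplying these constant contributions along the finitely many transducers, the final state in $S(A)$ is computed by reading $w$ a constant number of times in $O(1)$ work space, with recursion depth at most the constant number of transducers; accepting exactly when this state lies in the designated subset yields a constant-space decision procedure, so the language is regular. I expect the main obstacle to be the bookkeeping in the inductive step: verifying that every transducer output can genuinely be presented as a construction sequence of the restricted form Lemma~\ref{lem:fsmprepending} requires — one in which the variable sub-morphism appears at most once per building block — and bounding the number of passes uniformly in $|w|$. This is exactly where the monoidal factorization, rather than any genuinely two-dimensional feature of the tape categories, must do the work, and making that reduction fully rigorous for arbitrary branching, copying tape categories is the crux, which is presumably why the result is stated only as a conjecture.
\end{proofsketch}
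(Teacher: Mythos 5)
Your sketch shares the paper's skeleton: both reduce the conjecture to showing the decision procedure runs in constant space (invoking $\mathrm{DSPACE}(O(1))=\mathrm{REG}$), both use Lemma~\ref{lem:fsmprepending} as the single-transducer workhorse, and both lean on Proposition~\ref{prop:boundedduplication} to bound how many times an auxiliary input can be duplicated. Where you diverge is the organization of the simulation. You run the machine \emph{forward} in topological order, treating each intermediate morphism as a stream that is re-emitted on demand by re-reading $w$ a bounded number of times. The paper's sketch instead works \emph{incrementally and backward}: it fixes the current input $g$ together with a finite set of possible one-character extensions, and asks what constant-size summary must be maintained so that the new output state can be computed when a character is appended, back-propagating that requirement from the final state space to the primary input of the first transducer. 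The paper's framing never needs to materialize or replay an intermediate morphism; it only needs to track how a single appended generator perturbs each transducer's state, which is why the ``number of shapes of auxiliary inputs is bounded'' observation carries the weight there.

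The concrete gap in your version is the inductive hypothesis that every intermediate morphism ``can be emitted as a streamed construction sequence by a finite-state process that reads $w$ a number of times bounded independently of $|w|$ and keeps only $O(1)$ work space.'' Composition of constant-space streaming transducers does not behave like composition of logspace transducers: when the downstream transducer needs to resume consumption of an intermediate stream of length $O(n)$, the upstream process must be restarted and advanced to the correct position, which in general requires a pointer of $\Theta(\log n)$ bits, or else one pass of $w$ \emph{per intermediate symbol}, giving $O(n)$ passes. This is precisely the obstacle the paper's own remark identifies (the naive single-tape simulation crosses the last input cell $O(n)$ times, which is why Kobayashi--Hennie does not apply directly), so your sketch relocates the difficulty rather than resolving it. Because transducers can prepend, the intermediate morphisms have an ``onion'' structure $u_k\cdots u_1\,(\cdot)\,v_1\cdots v_k$ whose output-order traversal does not match any fixed-direction scan of $w$; making your re-streaming step work would require showing these onions fall into the boundedly many ``shapes'' the paper alludes to, at which point you would essentially have reconstructed the paper's back-propagation argument. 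You correctly flag this as the crux, which is the honest conclusion — but be aware that the pass-count bound is not a bookkeeping detail, it is the entire open content of the conjecture.
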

\begin{proofsketch}
It suffices to prove that the computation involved in deciding
this language takes constant space as a function of input size,
as $\mathrm{DSPACE}(O(1))=\mathrm{REG}$. More precisely, given
any input morphism $g:X\to X$ and a finite set of
possible character strings that
we can post-compose into $g$, we
desire to show that there is a constant-bounded amount of information
(independent of the size of $g$) that
we need to store, which will allow us to calculate the output state after
any of these character strings are added to $g$.
By Proposition~\ref{prop:boundedduplication}, each transducer
involved in the string machine can create a constant-bounded
number of duplicates of its auxiliary inputs within each of
its outputs. This means that to predict the state of a given
transducer after one of the operations described in
Lemma~\ref{lem:fsmprepending} done to an auxiliary input,
we need to store a constant amount of information. Moreover, due to the limitation
caused by the size of a variable on the number of transitions one
can make before we run out of variable space and must either limit
ourselves to Lemma~\ref{lem:fsmprepending}-like operations or
reset, the number of ``shapes'' auxiliary inputs can exist in
is absolutely bounded. The hope is that if we back-propagate until
reaching the primary input of the first transducer, we will still
have a constant-bounded amount of information to keep track of to
predict the output state after the next character is reached.
\end{proofsketch}
\begin{remark}
    This conjecture is already
    suggested by the combination of Corollary~\ref{corr:linearruntime}
    and classical results by Kobayashi~\cite{kobayashi1985onetape}
    and Hennie~\cite{hennie1965onetape} which together imply that
    any one-tape Turing machine that runs in $o(n\log n)$ time decides
    a regular language. The reason
    the proposition is not trivially implied by
    these results is that a naive simulation of the string machine
    in question on a single-tape Turing machine will cross the last cell
    of its initial input a guaranteed $O(n)$ times (when going back and
    forth to read the next cell to write the inputs to feed to the next
    automaton in line). The conjecture would also automatically
    be implied if we prove that the composition of two finite-state
    transducers can be equivalently modeled by a single finite-state
    transducer.
\end{remark}

\subsubsection{With a meta-vertex}
The meta-vertex, even at meta-level one, can clearly be seen to
increase the expressivity of finite-state string machines. We
construct a string machine which can be recognize
palindromes as follows:
\\ \\
Given
a tape category $\mathcal{X}$ where all generating morphisms are
endomorphisms of the generating object $X$, mark 
a special generating endomorphism $r$. Given a generating
endomorphism $\alpha\neq r$
of $X$, we define $T_{\alpha}$ to be the
filtered deterministic transducer with input and output category
$\mathcal{X}$, with input and output signatures $X\to X$ and three
states, a fixed one of which is the starting state.
When in the starting state, $T_{\alpha}$ will transition to the
``accepting state'' and store $\mathrm{id}_X$ in the sole variable
at that state if it encounters $\alpha$, and transition to the
``rejecting state'' and store $r$ if any other
generating morphism is encountered. When in the rejecting state, all
morphisms are mapped to the identity on the rejecting state. When
in the accepting state, all morphisms transition back to the accepting
state, but this time append the current primary input generator
morphism to the output. The end result is that when
given an endomorphism $s:X\to X$, the transducer $T_{\alpha}$ will
output $s$ with its first ``character'' removed if and
only if it starts with $\alpha$, and output $r$ otherwise.
As a morphism, 
it is of signature $(X\to X)_{\mathcal{X}} \to (X\to X)_{\mathcal{X}}$.
\\ \\
Now, we define a
filtered
deterministic transducer $T$ with input category $\mathcal{X}$, which
outputs a filtered deterministic transducer without meta vertices.
This transducer has input signature $X\to X$ and only one state,
with one variable of signature $(X\to X) \to (X \to X)$, and
it is assumed that no morphism $X\to X$ fed to $T$ will include
$r$. Given a generator morphism $\alpha$ of $\mathcal{X}$,
the transducer
$T$ will prepend $T_{\alpha}$ to the current variable's stored string
machine. If we compose a meta vertex that takes the output of $T$
and takes the same input morphism as $T$ and passes it on to the
string machine formed by $T$, then we obtain a string machine that
outputs $\mathrm{id}_X$ if it is given a palindrome, and
$r$ if it is fed anything else.
\section{Conclusions}
Thus far, we have proven results about the runtime and expressivity of
finite-state deterministic transducers in the string machines framework.
This framework allows us to model the compositionality of transducers
in a way that allows for more compression of description length.
For example, we
are able to implement the
trivial time-space tradeoff of recognizing a regular
language that is the intersection of several regular languages by
running several automata in sequence instead of the product automaton,
or the cascade product in Krohn-Rhodes
theory~\cite{krohn1965algebraic} by having a transducer ``shuffle''
its primary input using its ability to prepend.
\\ \\
Additionally, our runtime scaling
guarantees in the number of transducers in 
a string machine translate to sufficient conditions for
transducers which build and run other transducers in runtime to run in
polynomial time.
This will allow
us to take advantage of the additional compression opportunities the
meta-vertex presents.
These results are valuable for developing
string machines into a framework that can be used to model hypothesis
spaces representable by automata.
\subsection{Further work}
\subsubsection{Description complexity that encapsulates runtime}
\label{sec:desccomplex}
As exemplified by Definition~\ref{def:freegenfilteredmorphism},
the construction of filtered sets can be extended to a
partially-ordered monoid that is not $\mathbb{N}$. Since a string
machine without a meta-vertex will have the degrees of its outputs
be bounded by a linear function of the degrees of its inputs, we
can assign a degree $ax+b$ to finite-state
filtered transducers, where $a$ and $b$
are the maximal values for some sum of the coefficients appearing
in the output degree in Definition~\ref{def:outputdegree}, where
the partial order is the same as in 
Definition~\ref{def:freegenfilteredmorphism}. However, in this case,
the monoid operation imposed on the polynomials is composition, due
to how the linear relations bounding output size get composed when
transducers are composed, and the filtered-set category is no
longer symmetric monoidal since this operation is not commutative.
\\ \\
Now, the variables in our
state categories would have their degrees be of the form
$ay+bx+c$, where a morphism of degree $dx+e$ can compose
in a transducer of at most that degree, meaning the output
variable is of degree $a(y+(dx+e))+bx+c=ay+(dx+e)^a \circ
bx+c$ by abuse of notation. where the superscript indicates
$a$-fold self-composition. At further meta-levels, we keep
adding terms, though it is unclear
how we would disambiguate which variable to compose
into for the monoid operation. 
\subsubsection{Characterizing the
expressivity of transducers with a meta-vertex
and monadic transducers}
We have not yet established bounds on the expressivity of
finite-state deterministic transducers which incorporate the
meta-vertex, but, due to the bounding on levels of meta, it is
likely that the languages they decide are some subset of context-free
languages. The runtime guarantees also need to be translated to
monadic transducers.
Work on bounding the expressivity of monadic string
machines may involve defining equivalents
for string machines of various
properties~\cite{mohri2004weighted, allauzen2004optimal, 
kostolanyi2022determinisability, bell2023computing} which
characterize determinizable weighted transducers, which may in turn
allow us to bound the computational complexity of computing the
distribution on output morphisms given by a monadic transducer.
\bibliographystyle{abbrv}
\bibliography{main}

@misc{earnshaw2023string,
      title={String Diagrammatic Trace Theory}, 
      author={Matthew Earnshaw and Paweł Sobociński},
      year={2023},
      eprint={2306.16341},
      archivePrefix={arXiv},
      primaryClass={cs.FL}
}

@misc{earnshaw2022regular,
      title={Regular Monoidal Languages}, 
      author={Matthew Earnshaw and Paweł Sobociński},
      year={2022},
      eprint={2207.00526},
      archivePrefix={arXiv},
      primaryClass={cs.FL}
}

@misc{niu2023polynomial,
      title={Polynomial Functors: A Mathematical Theory of Interaction}, 
      author={Nelson Niu and David I. Spivak},
      year={2023},
      eprint={2312.00990},
      archivePrefix={arXiv},
      primaryClass={math.CT}
}

@misc{boccali2023completeness,
      title={Completeness for categories of generalized automata}, 
      author={Guido Boccali and Andrea Laretto and Fosco Loregian and Stefano Luneia},
      year={2023},
      eprint={2303.03867},
      archivePrefix={arXiv},
      primaryClass={math.CT}
}

@book{adamekautomata,
title={Automata and Algebras in Categories},
author={Jiri Adamek and Vera Trnkova},
publisher={Kluwer Academic Publishers},
year={1990},
address={Dordrecht, Netherlands}}

@misc{hutter2000theory,
      title={A Theory of Universal Artificial Intelligence based on Algorithmic Complexity}, 
      author={Marcus Hutter},
      year={2000},
      eprint={cs/0004001},
      archivePrefix={arXiv},
      primaryClass={cs.AI}
}

@article{hennie1965onetape,
title = {One-tape, off-line Turing machine computations},
journal = {Information and Control},
volume = {8},
number = {6},
pages = {553-578},
year = {1965},
issn = {0019-9958},
doi = {https://doi.org/10.1016/S0019-9958(65)90399-2},
url = {https://www.sciencedirect.com/science/article/pii/S0019995865903992},
author = {F.C. Hennie},
abstract = {This paper has two purposes. The first is to investigate the characteristics of a restricted class of Turing machines, and to develop a simple tool for describing their computations. The second is to present specific problems for which tight lower bounds can be found for the computation times required by Turing machines of this restricted class.}
}

@article{kobayashi1985onetape,
title = {On the structure of one-tape nondeterministic turing machine time hierarchy},
journal = {Theoretical Computer Science},
volume = {40},
pages = {175-193},
year = {1985},
note = {Eleventh International Colloquium on Automata, Languages and Programming},
issn = {0304-3975},
doi = {https://doi.org/10.1016/0304-3975(85)90165-3},
url = {https://www.sciencedirect.com/science/article/pii/0304397585901653},
author = {Kojiro Kobayashi},
abstract = {We show that if the number of available states is fixed and is sufficiently large, then one-tape nondeterministic Turing machines can accept more sets within time bound a2Æ’(n) than within a1Æ’(n), for 0<a1<a2. Here, Æ’(n) is any function of the form nb0(log n)b1(log2n)b2â€¦(loghn)bh (b0,â€¦,bh are rational numbers) with order n log n and n2. Crossing sequences and Kolmogorov complexity are used to prove it.}
}

@Inbook{mohri2004weighted,
author="Mohri, Mehryar",
title="Weighted Finite-State Transducer Algorithms. An Overview",
bookTitle="Formal Languages and Applications",
year="2004",
publisher="Springer Berlin Heidelberg",
address="Berlin, Heidelberg",
pages="551--563",
abstract="Weighted finite-state transducers are used in many applications such as text, speech and image processing. This chapter gives an overview of several recent weighted transducer algorithms, including composition of weighted transducers, determinization of weighted automata, a weight pushing algorithm, and minimization of weighted automata. It briefly describes these algorithms, discusses their running time complexity and conditions of application, and shows examples illustrating their application.",
isbn="978-3-540-39886-8",
doi="10.1007/978-3-540-39886-8_29",
url="https://doi.org/10.1007/978-3-540-39886-8_29"
}

@misc{kosoy2023compositional,
author="Kosoy, Vanessa",
title="Compositional language for hypotheses about computations",
year="2023",
month="March",
howpublished="{AI} {A}lignment {F}orum",
url="https://www.alignmentforum.org/posts/jxE47v3aezEYaydg9/compositional-language-for-hypotheses-about-computations"
}

@article{kostolanyi2022determinisability,
title = {Determinisability of unary weighted automata over the rational numbers},
journal = {Theoretical Computer Science},
volume = {898},
pages = {110-131},
year = {2022},
issn = {0304-3975},
doi = {https://doi.org/10.1016/j.tcs.2021.11.002},
url = {https://www.sciencedirect.com/science/article/pii/S0304397521006502},
author = {Peter Kostolanyi},
keywords = {Deterministic weighted automaton, Sequential weighted automaton, Reduced representation, Characteristic polynomial, Cyclotomic polynomial},
abstract = {Weighted finite automata over the field of rational numbers and unary alphabets are considered. The notion of a characteristic polynomial is introduced for such automata as a means to provide a decidable necessary and sufficient condition, under which a unary weighted automaton admits a deterministic, i.e., sequential equivalent. The sequentiality problem for univariate rational series is thus proved to be decidable both over the rational numbers and over the integers, confirming a conjecture of S. Lombardy and J. Sakarovitch; its decidability over the nonnegative integers is observed as well. The decision algorithm proposed for these tasks is shown to run in polynomial time. A determinisation algorithm for determinisable unary weighted automata over the rational numbers is also described.}
}

@inproceedings{bell2023computing,
  title={Computing the linear hull: Deciding {D}eterministic? and {U}nambiguous? for weighted automata over fields},
  author={Bell, Jason P and Smertnig, Daniel},
  booktitle={2023 38th Annual ACM/IEEE Symposium on Logic in Computer Science (LICS)},
  pages={1--13},
  year={2023},
  organization={IEEE}
}

@article{allauzen2004optimal,
  title={An optimal pre-determinization algorithm for weighted transducers},
  author={Allauzen, Cyril and Mohri, Mehryar},
  journal={Theoretical Computer Science},
  volume={328},
  number={1-2},
  pages={3--18},
  year={2004},
  publisher={Elsevier}
}

@article{krohn1965algebraic,
 ISSN = {00029947},
 URL = {http://www.jstor.org/stable/1994127},
 author = {Kenneth Krohn and John Rhodes},
 journal = {Transactions of the American Mathematical Society},
 pages = {450--464},
 publisher = {American Mathematical Society},
 title = {Algebraic Theory of Machines. {I}. {P}rime Decomposition Theorem for Finite Semigroups and Machines},
 urldate = {2024-03-31},
 volume = {116},
 year = {1965}
}

@article{hermida2000representable,
title = {Representable Multicategories},
journal = {Advances in Mathematics},
volume = {151},
number = {2},
pages = {164-225},
year = {2000},
issn = {0001-8708},
doi = {https://doi.org/10.1006/aima.1999.1877},
url = {https://www.sciencedirect.com/science/article/pii/S0001870899918777},
author = {Claudio Hermida},
abstract = {We introduce the notion of representable multicategory, which stands in the same relation to that of monoidal category as fibration does to contravariant pseudofunctor (into Cat). We give an abstract reformulation of multicategories as monads in a suitable Kleisli bicategory of spans. We describe represent ability in elementary terms via universal arrows. We also give a doctrinal characterisation of representability based on a fundamental monadic adjunction between the 2-category of multicategories and that of strict monoidal categories. The first main result is the coherence theorem for representable multicategories, asserting their equivalence to strict ones, which we establish via a new technique based on the above doctrinal characterisation. The other main result is a 2-equivalence between the 2-category of representable multicategories and that of monoidal categories and strong monoidal functors. This correspondence extends smoothly to one between bicategories and a several object version of representable multicategories.}
}

@article{kearns1995computational,
  title={Computational learning theory},
  author={Kearns, Michael J and Vazirani, Umesh V},
  journal={ACM SIGACT News},
  volume={26},
  number={1},
  pages={43--45},
  year={1995},
  publisher={ACM New York, NY, USA}
}

@misc{ronca2022efficient,
      title={Efficient {PAC} Reinforcement Learning in Regular Decision Processes}, 
      author={Alessandro Ronca and Giuseppe De Giacomo},
      year={2022},
      eprint={2105.06784},
      archivePrefix={arXiv},
      primaryClass={cs.AI}
}

\end{document}